\newcommand{\SO}[1]{\ensuremath{\textrm{SO}(#1)} }
\newcommand{\SE}[1]{\ensuremath{\textrm{SE}(#1)} }
\newcommand{\se}[1]{\ensuremath{\mathfrak{se}(#1)} }
\newcommand{\dotpr}[2]{\ensuremath{\langle #1, #2 \rangle }}
\def\tr{\ensuremath{\textrm{trace}} }
\def\R{\ensuremath{\mathbb{R}} }
\def\G{\ensuremath{G} }
\def\g{\ensuremath{\mathfrak{g}} }
\def\Pa{\ensuremath{\pi_\g} }
\newtheorem{thm}{Theorem}
\newtheorem{assum}{Assumption}
\begin{document}

\title{Globally exponentially convergent observer for \\ systems evolving on matrix Lie groups}
\date{}
\author{Soham Shanbhag\thanks{sshanbhag@kaist.ac.kr} }
\author{Dong Eui Chang\thanks{Corresponding author, dechang@kaist.ac.kr}}
\affil{School of Electrical Engineering, Korea Advanced Institute of Science and Technology, Daejeon, Republic of Korea}

\maketitle
\begin{abstract}
    We propose a globally exponentially convergent observer for the dynamical system evolving on matrix Lie groups with bounded velocity with unknown bound.
    We design the observer in the ambient Euclidean space and show exponential convergence of the observer to the state of the system.
    We show the convergence with an example of a rigid body rotation and translation system on the special Euclidean group.
    We compare the proposed observer with an observer present in the literature.

    \textbf{Keywords:} Continuous time observer, Lie groups
\end{abstract}


\section{Introduction}\label{sec_intro}
In design of feedback based control systems, the estimates of the state are used to compute the control input given to the system.
Often, the measurements of the system are not the true measurements of the system, or are biased and contain noise.
For this purpose, estimators are designed for the system in question to provide estimates of the state and the bias based on measurements and the underlying system model.
The design of such estimators is hence an important area of research.

Previous research in the field of observer design for systems on Lie groups is restricted to designing observers on the Lie groups.
The authors in \cite{KhosrTML2015} design a gradient based uniformly locally exponential observer on Lie groups.
Similar local exponential results are achieved by \cite{LagemTM2010} on Lie groups.
The results in case of matrix Lie groups are much more numerous.
The authors in \cite{HuaZTMH2011} design a locally asymptotically stable observer for the system evolving on the Special Euclidean group.
The authors in \cite{WangT2020b} present a tutorial of various observers designed on the special Euclidean group.
However, in case of observer systems designed on Lie groups, these estimators are not able to provide estimates for all states due to the underlying group structure \cite{BhatB1998}.
To overcome this limitation, the authors in \cite{WangT2019} propose a hybrid structure of the observer on SE(3).
While this leads to global convergence of the observer, this leads to a discontinuous observer.
Another method for achieving global convergence results is by designing the observer in the ambient Euclidean space.
The author in \cite{Chang2021} designs a global exponential continuous observer for systems evolving on matrix Lie groups using this technique, but he assumes knowledge of bounds of the velocity of the system.
While the system in consideration may have a bounded velocity, the knowledge of the bounds is usually not known apriori.
Hence, without the knowledge of the bounds, an observer cannot be designed for the dynamical system using the results in \cite{Chang2021}.
Using a Kalman/Riccati observer for the same is a possible method with the above technique, however, it is computationally expensive(atleast $n^2$ dimensional, where $n$ is the dimension of the Euclidean space).

We propose a globally exponentially convergent observer for the dynamical system evolving on a Lie group.
We design this observer on the ambient Euclidean space to achieve global convergence properties, while being continuous.
This extension allows us to treat the considered system as a linear time varying system, and design a constant gain observer for the linear time varying system.
Moreover, due to no assumptions of the knowledge of bounds while achieving globality of results while being continuous, our observer improves upon existing works in the literature.

The observer proposed in \cite{ShanbC2022a} is a specific case of the observer proposed in this paper for the rigid body rotation and translation system.
The observer in \cite{ShanbC2022a} improves on the proposed observer in this paper by relaxing some of the bounds in this paper (linear velocity and position of the system).
However, these assumptions can only be relaxed due to the structure of the rigid body system, and hence the observer in \cite{ShanbC2022a} does not apply to other systems.


\section{Preliminaries}\label{sec_prelim}
We denote the estimate of the state $A$ by $\bar{A}$.
The Euclidean inner product of two matrices in $\R^{m \times n}$ is denoted by $\dotpr{\cdot}{\cdot}: \R^{m \times n} \times \R^{m \times n} \to \R$ such that $\dotpr{A}{B} = \tr(A^T B)$.
The Euclidean norm of a matrix $A \in \R^{m \times n}$ is defined as $\| A \| = \sqrt{\langle A, A \rangle}$.
The orthogonal projection of $A \in \R^{n \times n}$ to a vector space $V$, which is a subspace of $\R^{n \times n}$, is denoted by $
\pi_V (A)$.
We denote the $i^{th}$ singular value of matrix $A$ by $\sigma_i(A)$, where the subscript \verb|min| and \verb|max| denote the minimum and maximum singular values respectively.

Let $\G \subset \R^{n \times n}$ be a matrix Lie group, and $\g$ be its Lie algebra.
Consider the dynamical system in continuous time
\begin{align} \label{sys_main}
    \dot{g} &= g \xi,
\end{align}
where $g \in \G \subset \R^{n \times n}$ denotes the state of the system, and $\xi \in \g$ denotes the velocity of the body.
We assume that a matrix valued signal $A_m(t)$ is available such that $A_m = Fg$, where $F$ is a constant invertible matrix in $\R^{n \times n}$.
The measurement of velocity is assumed to be biased as $\xi_m = \xi + b$, where $b \in \g$ is a constant or a slowly varying bias.
This can be well approximated by the equation $\dot{b} = 0$.
This is a common form of measurement, as can be seen in \cite{KhosrTML2015,LagemTM2010,WangT2019,Chang2021} as well as the references contained therein.
All measurements are assumed to be available in continuous time.
In practice, the measurements of the states and the velocities will be corrupted by noise.

Moreover, consider the following assumptions on the velocity and state of the system:
\begin{assum}\label{assum:vel_bound}
    The velocity of the system, $\xi$, is bounded, i.e., there exists an unknown $M > 0$ such that $\sup_{t \ge 0} \| \xi(t) \| < M$.
\end{assum}
\begin{assum}\label{assum:traj}
    There exist unknown $L_g > 0$ and $R_g > 0$ such that $L_g \leq \sigma_{\textrm{min}}(g(t)) \leq \sigma_{\textrm{max}}(g(t)) \leq R_g$ for all $t \geq 0$.
\end{assum}
Since $\sigma_{\textrm{max}}(g(t)) \leq R_g$, we have that $\| g(t) \| = \sum_{i = 1}^n \sigma_i(g(t)) \leq n R_g$ is bounded for all $t \geq 0$.
Also, $\sigma_{\textrm{min}}(g(t)) \geq L_g$ ensures that $\| g^{-1}(t) \| = \sum_{i = 1}^n 1/\sigma_i(g(t)) \leq n/L_g$ is bounded for all $t \geq 0$.


\section{Proposed Observer}\label{sec_observer} 
We define the following observer system:
\begin{align}\label{sys_est}
    \begin{aligned}
        \dot{\bar{A}} &= A_m \xi_m + k_1 (A_m - \bar{A}) - A_m \bar{b},\\
        \dot{\bar{b}} &= - k_2 \Pa(A_m^T (A_m - \bar{A})),
    \end{aligned}
\end{align}
where the observer state $(\bar{A}(t), \bar{b}(t)) \in \R^{n \times n} \times \g$, and $k_1$ and $k_2$ are constants to be chosen by the designer of the estimator.

\begin{thm}\label{theo_main}
    Define the observer error terms
    \begin{align*}
        E_A = A - \bar{A}, \quad E_b = b - \bar{b},
    \end{align*}
    where $A = Fg$. Suppose that Assumptions \ref{assum:vel_bound} and \ref{assum:traj} hold.
    Then, with $k_1, k_2 > 0$ and the update laws as given in equations \eqref{sys_main} and \eqref{sys_est}, the error terms $\| E_A(t) \|$ and $\| E_b(t) \|$ converge exponentially to $0$ as $t \to \infty$ for all $(\bar{A}(0), \bar{b}(0)) \in \R^{n \times n} \times \g$.
\end{thm}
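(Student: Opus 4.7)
The plan is to work in the error coordinates $(E_A, E_b)$, derive the resulting linear time-varying error dynamics, and construct a Lyapunov function whose derivative is strictly negative definite in both components.

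First, differentiating $A = Fg$ along \eqref{sys_main} gives $\dot A = A\xi$. Subtracting \eqref{sys_est} from the true evolution, using $\dot b = 0$ and $\xi_m = \xi + b$, yields
\begin{align*}
\dot E_A &= -k_1 E_A - A E_b, \\
\dot E_b &= k_2 \Pa(A^T E_A).
\end{align*}
With the natural candidate $V = \tfrac{1}{2}\|E_A\|^2 + \tfrac{1}{2 k_2}\|E_b\|^2$, the self-adjointness of the orthogonal projection $\Pa$ combined with $E_b \in \g$ gives the identity $\langle E_b, \Pa(A^T E_A)\rangle = \langle A E_b, E_A\rangle$, which cancels the indefinite cross terms and leaves $\dot V = -k_1\|E_A\|^2 \le 0$. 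This immediately yields boundedness of $E_A, E_b$, but by itself only provides asymptotic convergence.

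To upgrade to exponential convergence, I would introduce the cross correction $W = V + \epsilon \langle E_A, A E_b\rangle$ for a small $\epsilon > 0$. Assumption \ref{assum:traj} provides uniform bounds $\sigma_{\min}(A(t)) \ge \sigma_{\min}(F) L_g > 0$ and $\|A(t)\| \le n\|F\|R_g$, so for $\epsilon$ small enough, $W$ is bounded above and below by positive multiples of $\|E_A\|^2 + \|E_b\|^2$. Differentiating the cross term using $\dot A = A\xi$ and again the projection identity gives
\begin{align*}
\tfrac{d}{dt}\langle E_A, A E_b\rangle = -k_1 \langle E_A, A E_b\rangle - \|A E_b\|^2 + \langle E_A, A\xi E_b\rangle + k_2 \|\Pa(A^T E_A)\|^2.
\end{align*}
The negative definite piece $-\|A E_b\|^2$ is coercive in $E_b$ via $\|A E_b\|^2 \ge \sigma_{\min}(F)^2 L_g^2 \|E_b\|^2$, the term $k_2\|\Pa(A^T E_A)\|^2$ is dominated by $k_2 \|A\|^2 \|E_A\|^2$ and absorbed by $-k_1\|E_A\|^2$, and the indefinite term $\langle E_A, A\xi E_b\rangle$ is controlled by Young's inequality using $\|A\xi\| \le n\|F\|R_g M$ from Assumption \ref{assum:vel_bound}. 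For $\epsilon$ small enough one then obtains $\dot W \le -\alpha\|E_A\|^2 - \beta\|E_b\|^2$ for some $\alpha,\beta > 0$, hence $\dot W \le -\lambda W$ and exponential decay of both error terms.

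The main obstacle is that the admissible $\epsilon$ and the rate $\lambda$ depend on the unknown bound $M$. This is acceptable for the theorem, since the observer gains $k_1, k_2$ are chosen independently of $M$; only the a posteriori Lyapunov certificate relies on the existence of the bound. The resulting rate will degrade as $M$ grows, but exponential convergence is preserved for any bounded velocity, matching the theorem statement.
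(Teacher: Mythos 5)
Your proposal is correct, and it reaches the conclusion by a genuinely different route than the paper. You share the same error dynamics and the same weak Lyapunov function (your $V$ is the paper's divided by $k_2$), with the identical projection argument killing the cross terms to give $\dot V = -k_1\|E_A\|^2$. From there the paper proceeds qualitatively: it applies Barbalat's lemma twice (once to get $E_A \to 0$, once more on $\dot E_A$ to extract $E_b \to 0$ via invertibility of $A$), asserts uniform asymptotic stability from the time-independence of $V$, and then invokes Theorem 4.11 of Khalil to convert uniform asymptotic stability of a linear time-varying system into exponential stability. You instead strictify the Lyapunov function with the cross term $W = V + \epsilon\langle E_A, AE_b\rangle$; your computation of $\tfrac{d}{dt}\langle E_A, AE_b\rangle$ is correct, the coercivity $\|AE_b\|^2 \ge \sigma_{\min}(F)^2 L_g^2\|E_b\|^2$ is exactly where Assumption 2 enters, and the remaining indefinite terms are absorbed for small $\epsilon$ in the standard way, giving $\dot W \le -\lambda W$ directly. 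What your approach buys: it is self-contained (no appeal to Barbalat or to the Khalil equivalence theorem), it produces an explicit, if conservative, decay rate, and it sidesteps the weakest link in the paper's chain --- the step from attractivity (which Barbalat gives trajectory-wise) to \emph{uniform} asymptotic stability, which the paper justifies only by the time-independence of $V$ and which is really needed before Khalil's theorem applies. Your closing remark is also the right one: the dependence of $\epsilon$ and $\lambda$ on the unknown bounds $M$, $L_g$, $R_g$ is harmless because the theorem asserts only existence of exponential convergence, and the gains $k_1, k_2$ never use those constants. The only details worth writing out in a final version are the two-sided bound showing $W$ is equivalent to $\|E_A\|^2 + \|E_b\|^2$ for $\epsilon$ small (using $\|A\| \le \sqrt{n}\,\|F\| R_g$), and the bookkeeping confirming that the $\|E_b\|^2$ contributions from Young's inequality can be made smaller than $\sigma_{\min}(F)^2 L_g^2$ while their $\|E_A\|^2$ counterparts, carrying the factor $\epsilon$, stay below $k_1$.
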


\begin{proof}
    Differentiating the error terms $E_A$ and $E_b$ with respect to time along the trajectory of equations \eqref{sys_main} and \eqref{sys_est}, we get the error system
    \begin{subequations}\label{sys_err}
    \begin{align}
        \dot{E}_A &= - k_1 E_A - A E_b\label{eq_err_EA},\\
        \dot{E}_b &= k_2 \Pa(A^T E_A) \label{eq_err_Eb}.
    \end{align}
    \end{subequations}
    The measurement models explained in section \ref{sec_prelim} have been substituted here for the measurement terms in equation \eqref{sys_est}.

    Define the Lyapunov function
    \begin{align}\label{eq_lyapunov}
        V = \frac{k_2}{2} \dotpr{E_A}{E_A} + \frac12 \dotpr{E_b}{E_b}.
    \end{align}
    Differentiating $V$ with respect to time along the trajectory of system \eqref{sys_err}, we get
    \begin{align*}
        \dot{V} &= k_2 \dotpr{E_A}{\dot{E}_A} + \dotpr{E_b}{\dot{E}_b}
        = k_2 \dotpr{E_A}{- k_1 E_A - A E_b} + \dotpr{E_b}{k_2 \Pa(A^T E_A)}\\
        &= - k_1 k_2 \dotpr{E_A}{E_A} - k_2 \dotpr{E_A}{A E_b} + k_2 \dotpr{E_b}{A^T E_A}
        = -k_1 k_2 \dotpr{E_A}{E_A},
    \end{align*}
    where we use the property $\dotpr{E_b}{\Pa(A^T E_A)} = \dotpr{E_b}{A^T E_A}$ due to orthogonality of $E_b \in \g$ and $A^T E_A - \Pa(A^T E_A) \in \g^\perp$, the orthogonal complement of $\g$.
    Since $\dotpr{E_A}{E_A}$ is non-negative, $\dot{V}$ is negative semi-definite.
    Hence, $\dotpr{E_A}{E_A}$ and $\dotpr{E_b}{E_b}$ are bounded.
    Moreover, since $V$ is bounded below by $0$ and non-increasing, $V(t)$ has a finite limit as $t \to \infty$.

    To prove asymptotic stability, we use Barbalat's lemma. Differentiating $\dot{V}$ with respect to time along the trajectory of system \eqref{sys_err}, we get
    \begin{align*}
        \ddot{V} &= -2 k_1 k_2 \dotpr{E_A}{\dot{E}_A} = -2 k_1 k_2 \dotpr{E_A}{- k_1 E_A - A E_b}
        = 2 k_1^2 k_2 \dotpr{E_A}{E_A} + 2 k_1 k_2 \dotpr{E_A}{A E_b}.
    \end{align*}
    Note that $\dotpr{E_A}{E_A}$ and $\dotpr{E_b}{E_b}$ are bounded from the initial value and decreasing nature of $V(t)$.
    Since $g$ is bounded due to Assumption \ref{assum:traj} along the trajectory of system \eqref{sys_main}, $A$ is bounded.
    Hence, the inner product $|\dotpr{E_A}{A E_b}| \leq \| E_A \| \| A \| \| E_b \|$ is also bounded.
    This implies that $\ddot{V}$ is bounded.
    Consequently $\dot{V}$ is uniformly continuous.
    By Barbalat's lemma, since $V$ has a finite limit as limit as $t \to \infty$, and $\dot{V}$ is uniformly continuous, $\dot{V}(t) \to 0$ as $t \to \infty$.
    Hence, $\lim_{t \to \infty}\| E_A(t) \| = 0$.
    Consequently, $\lim_{t \to \infty} E_A(t)  = 0$.

    To prove asymptotic convergence of $E_b$ to $0$, consider equation \eqref{eq_err_EA}.
    Differentiating equation \eqref{eq_err_EA} with respect to time along the trajectory of system \eqref{sys_err},
    \begin{align*}
        \ddot{E}_A &= - k_1 \dot{E}_b - \dot{A} E_b - A \dot{E}_b
        = k_1^2 E_A + k_1 A E_b - A \xi E_b - k_2 A \Pa(A^T E_A),
    \end{align*}
    which is bounded since $E_A$ and $E_b$ are bounded due to boundedness of $V$, $A$ is bounded due to boundedness of $g$ by Assumption \ref{assum:traj} along the trajectory of system \eqref{sys_main}, and $\xi$ is bounded due to Assumption \ref{assum:vel_bound}.
    Hence, $\dot{E}_A$ is uniformly continuous.
    From Barbalat's lemma, since $\lim_{t \to \infty} E_A(t) = 0$, we have that $\lim_{t \to \infty} \dot{E}_A(t) = 0$.
    Since $\G$ is a matrix Lie group, $g^{-1} \in \G$ exists.
    Hence, $A^{-1}(t) = g^{-1}(t) F^{-1}$ exists, is non-zero and finite due to Assumption \ref{assum:traj}.
    From equation \eqref{eq_err_EA}, $\lim_{t \to \infty} E_b(t) = \lim_{t \to \infty} - A^{-1}(t) (\dot{E}_A(t) + k_1 E_A(t)) = 0$.
    Hence, the error system is asymptotically stable.

    Since $V$ defined in equation \eqref{eq_lyapunov} is independent of $t$, the system is uniformly asymptotically stable.
    Using the fact that the error system \eqref{sys_err} is linear, from Theorem 4.11 in \cite{Khali2002}, we conclude that it is globally exponentially stable.
\end{proof}


\begin{figure}[!b]
    \centering
    \resizebox{\linewidth}{!}{\input{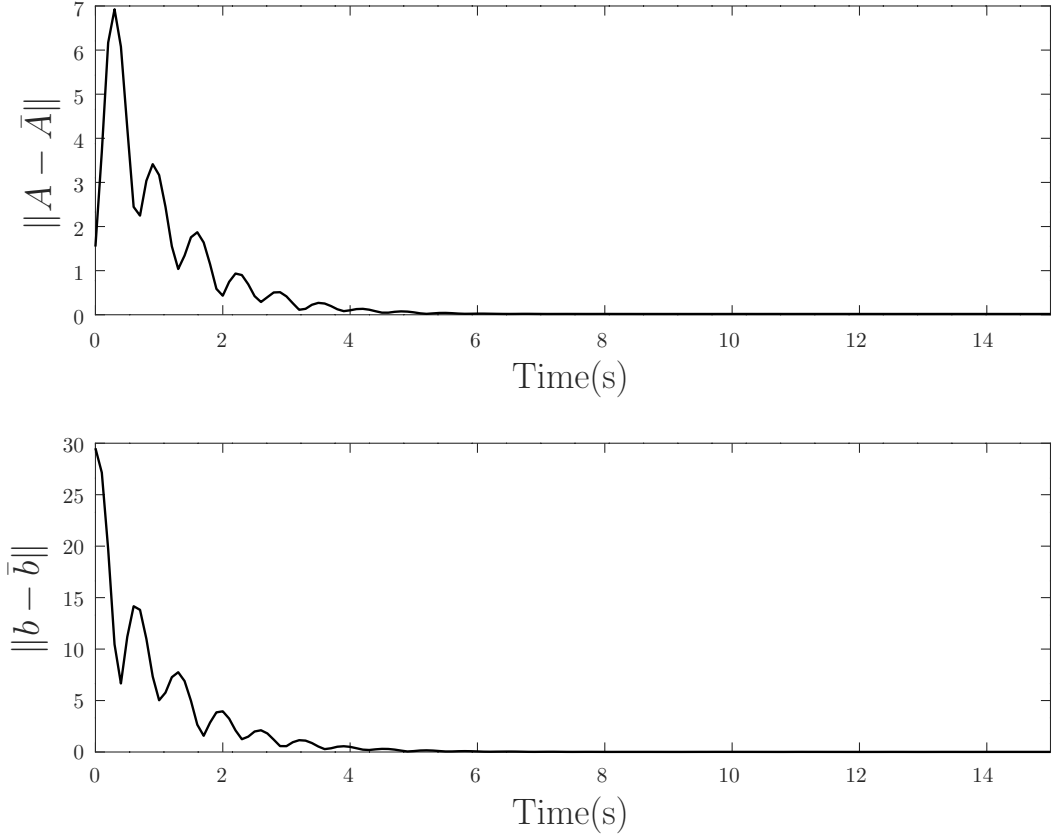}}
    \caption{Simulation of the proposed observer in presence of noise\label{fig_sim_SE3_noise}}
\end{figure}

\section{Numerical Simulation}\label{sec_sim}
To demonstrate the convergence of the proposed observer, we consider a rigid body translational and rotational dynamics system evolving on the special Euclidean group.
Define $\SO{3}$ as the special orthogonal group in three dimensions such that $R \in \SO{3}$ implies $R^T R = I$ and $\det(R) = 1$.
Let $\SE{3}$ denote the special Euclidean group defined as
\begin{align*}
    \SE{3} =  \left\{ \begin{bmatrix} R & p \\ 0 & 1 \end{bmatrix} \in \R^{4 \times 4} \mid R \in \SO{3}, p \in \R^3 \right\}.
\end{align*}
Define the hat map $\hat{\cdot}: \R^3 \to \R^{3\times 3}$ such that $\forall ~ x, y \in \R^3, x \times y = \hat{x} y$.
Then, the Lie algebra of $\SE{3}$ is denoted by
\begin{align*}
    \se{3} = \left\{ \begin{bmatrix} \hat{\Omega} & v \\ 0 & 0 \end{bmatrix} \in \R^{4 \times 4} \mid \Omega \in \R^{3}, v \in \R^3 \right\}.
\end{align*}
Since this is a vector space, the projection onto $\se{3}$ of a matrix in $\R^{4 \times 4}$ is defined as
\begin{align*}
    \pi_{\se{3}} \left(
    \begin{bmatrix}
        A & b \\ c^T & d
    \end{bmatrix}
    \right) =
    \begin{bmatrix}
        \frac12 (A - A^T) & b \\ 0 & 0
    \end{bmatrix},
\end{align*}
where $A \in \R^{3 \times 3}, b\in \R^3, c \in \R^3$ and $d \in \R$.

The system equation can then be defined as
\begin{align}\label{sys_SE3}
    \dot{X} = X \xi,
\end{align}
where $X \in \SE{3}$ is the state of the body, and $\xi \in \se{3}$ denotes the velocity of the body.
For measurement, landmarks at $x \in \R^3$ in homogeneous coordinates are denoted as $(x, 1)$, while such vectors at infinity (for example, the gravity vector) along $x \in \R^3$ are denoted as $(x, 0)$.
Suppose that for measurement, we measure the inertial vectors $s_1 = (e_1, 1), s_2 = (e_2, 1), s_3 = (e_3, 1)$ and $s_4 = (-e_3, 0)$. Define $F = \begin{bmatrix}s_1 & s_2 & s_3 & s_4\end{bmatrix}$.

Consider the true velocities of the system to be $\Omega = (-\sin(10t), \cos(10t), 0)$ and $v = (\cos(0.5t), \sin(0.5t), 0 )$.
Let the bias in the measurements of the velocities be $b_\Omega = (-10, 15, 8)$, and $b_v = (2, 8, 5)$,
such that the total bias is given by
\begin{align*}
    b = \begin{bmatrix} \hat{b}_\Omega & b_v \\ 0 & 0 \end{bmatrix}.
\end{align*}
We also assume the presence of Gaussian noise in measurements with mean $0$ and standard deviation $0.1$, such that the measurements are of the form $R_m = R \exp(\hat{w})$, $p_m = p + v$, and $\xi_m = \xi + b + z$, where $w, v \in \R^3$ and $z \in \se{3}$ are random numbers with mean $0$ and standard deviation $0.1$ to simulate a sensor.
This system leads to all assumptions being valid.

We design an observer based on Theorem \ref{theo_main} for the system evolving on \eqref{sys_SE3}.
The system trajectory is initialized with the state $R = I, p = (0, 0, 1)$.
The observer is initialized with the state $\bar{R} = \exp(-\pi \hat{e}_3/10)$, $\bar{p} = 0$ and $\bar{b} = 0$.
The simulation is run for 15 seconds.
We choose $k_1 = 2$ and $k_2 = 10$ for the simulation.
The error between the observer state and the system state as a function of time is shown in Figure \ref{fig_sim_SE3_noise}.
As can be seen from the simulations, the proposed observer converges fairly quickly to the true state.
Also, the bias in velocity measurements is estimated correctly.
Simulation in presence of noise shows that the observer is robust to noise.

\begin{figure}[!b]
    \centering
    \resizebox{\linewidth}{!}{\input{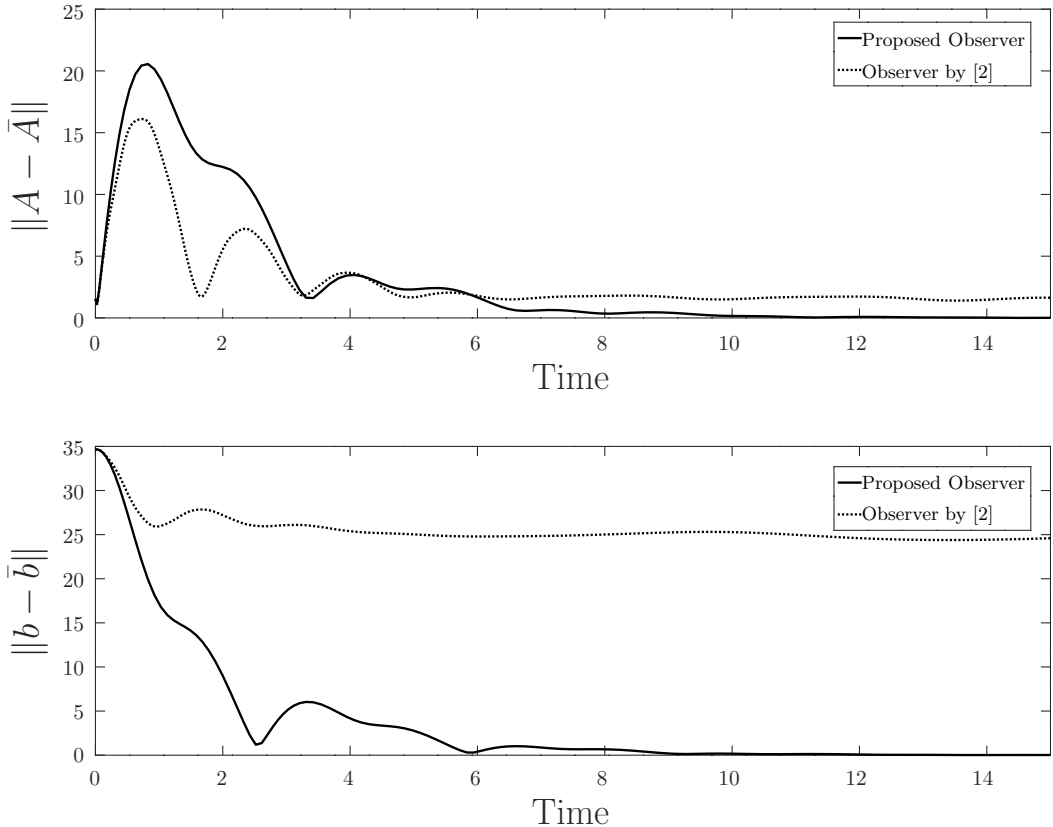}}
    \caption{Comparison of proposed observer with literature\label{fig_sim_SE3_err_comp}}
\end{figure}

To compare the proposed observer with previously available observer in literature, we consider the observer proposed by \cite{Chang2021}.
The observer in \cite{Chang2021} is a good candidate observer for comparison with the proposed observer in this paper since it also provides a continuous and globally convergent observer.
Other observers in literature are either not globally convergent, or discontinuous observers.
We choose $k_1 = 1$ and $k_2 = 1$ for the observers.
The velocity profile for the observers are chosen as $\Omega = 0$ and $v = (\cos(t), \sin(t), 0.5 \sin(2t))$, while the bias profiles are chosen as $b_\Omega = (10, 10, 10)$ and $b_v = (10, 20, 10)$.
Note that, these values are intended to violate the $k_1 > \sup_{t \geq 0}\| \xi(t) \| + \| b \|$ condition assumed in \cite{Chang2021}, since during application, the designer of the controller wouldn't know the magnitude of the bias.
This simulation is run for 25 seconds.
The simulation of this observer can be seen in Figure \ref{fig_sim_SE3_err_comp}.
As can be seen in the simulation, the observer proposed in \cite{Chang2021} fails to converge, while the observer proposed in this paper converges fairly quickly.
Moreover, the bias estimates of the observer in \cite{Chang2021} also fail to converge to the correct values, as compared to the proposed observer.


\section{Conclusion}\label{sec_conc}
A globally exponentially convergent observer is designed for the dynamical system on matrix Lie groups.
The observer improves upon existing observers by being continuous and globally convergent in its estimate, which is a feature most observers designed on Lie groups fail at.
We improve on observers showing global convergence in the literature by not assuming knowledge of velocity bounds.
A possible future research topic would be to design global exponential observers without any assumption of boundedness of the velocities.


\bibliographystyle{plain}
\bibliography{References}

\begin{thebibliography}{1}

\bibitem{BhatB1998}
Sanjay~P. Bhat and Dennis~S. Bernstein.
\newblock A topological obstruction to global asymptotic stabilization of
  rotational motion and the unwinding phenomenon.
\newblock In {\em Proc. 1998 {{Am}}. {{Control Conf}}. {{ACC IEEE Cat
  No98CH36207}}}, volume~5, pages 2785--2789, {Philadelphia, PA, USA}, June
  1998. {IEEE}.

\bibitem{Chang2021}
Dong~Eui Chang.
\newblock Globally exponentially convergent continuous observers for velocity
  bias and state for invariant kinematic systems on matrix {{Lie}} groups.
\newblock {\em IEEE Trans. Autom. Control}, 66(7):3363--3369, July 2021.

\bibitem{HuaZTMH2011}
Minh-Duc Hua, Mohammad Zamani, Jochen Trumpf, Robert Mahony, and Tarek Hamel.
\newblock Observer design on the special {{Euclidean}} group {{SE}}(3).
\newblock In {\em {{IEEE Conf}}. {{Decis}}. {{Control Eur}}. {{Control
  Conf}}.}, 2011 50th {{IEEE Conference}} on {{Decision}} and {{Control}} and
  {{European Control Conference}}, pages 8169--8175, {Orlando, FL}, December
  2011. {IEEE}.

\bibitem{Khali2002}
Hassan Khalil.
\newblock {\em Nonlinear Systems}.
\newblock {Prentice Hall}, {Upper Saddle River, New Jersey}, 2002.

\bibitem{KhosrTML2015}
Alireza Khosravian, Jochen Trumpf, Robert Mahony, and Christian Lageman.
\newblock Observers for invariant systems on {{Lie}} groups with biased input
  measurements and homogeneous outputs.
\newblock {\em Automatica}, 55:19--26, May 2015.

\bibitem{LagemTM2010}
Christian Lageman, Jochen Trumpf, and Robert Mahony.
\newblock Gradient-like observers for invariant dynamics on a {{Lie}} group.
\newblock {\em IEEE Trans. Autom. Control}, 55(2):367--377, February 2010.

\bibitem{ShanbC2022a}
Soham Shanbhag and Dong~Eui Chang.
\newblock Globally exponentially convergent observer for the rigid body system
  on {{SE}}(3).
\newblock In {\em 2022 {{IEEE}} 61st {{Conf}}. {{Decis}}. {{Control CDC}}},
  pages 1257--1262, December 2022.

\bibitem{WangT2019}
Miaomiao Wang and Abdelhamid Tayebi.
\newblock Hybrid pose and velocity-bias estimation on {{SE}}(3) using inertial
  and landmark measurements.
\newblock {\em IEEE Trans. Autom. Control}, 64(8):3399--3406, August 2019.

\bibitem{WangT2020b}
Miaomiao Wang and Abdelhamid Tayebi.
\newblock Observers design for inertial navigation systems: {{A}} brief
  tutorial.
\newblock In {\em 2020 59th {{IEEE Conf}}. {{Decis}}. {{Control CDC}}}. {IEEE},
  December 2020.

\end{thebibliography}

\end{document}